\newcommand{\xv}{{\bf x}}
\newcommand{\Real}{\mathbb{R}}
\newcommand{\ie}{i.e.,\xspace}
\newcommand{\sv}{{\bf s}}
\newcommand{\tv}{{\bf t}}
\newtheorem{theorem}{Theorem} 
\begin{document}

\setlength\abovedisplayskip{3pt}
\setlength\belowdisplayskip{3pt}
\title{Low-complexity vector source coding for discrete long sequences with unknown distributions}
\name{
Leah Woldemariam, Hang Liu and Anna Scaglione
	\thanks{This work was supported in part by the DoD-ARO under Grant No. W911NF2210228. This material is based upon work supported by the National Science Foundation Graduate Research Fellowship under Grant No. DGE – 2139899.}
}
\address{
	Department of Electrical and Computer Engineering, Cornell Tech, Cornell University, NY,  USA\\
 	Emails: \{lsw85,hl2382,as337\}@cornell.edu
}

\maketitle
\begin{abstract} 

In this paper, we propose a source coding scheme that represents data from unknown distributions through frequency and support information. Existing encoding schemes often compress data by sacrificing computational efficiency or by assuming the data follows a known distribution. We take advantage of the structure that arises within the spatial representation and utilize it to encode run-lengths within this representation using Golomb coding. Through theoretical analysis, we show that our scheme yields an overall bit rate that nears entropy without a computationally complex encoding algorithm and verify these results through numerical experiments. 
\end{abstract}
\begin{keywords}
Data compression, run-length coding, vector coding, bit analysis
\end{keywords}

%\IEEEpeerreviewmaketitle
\section{Introduction}
The increase of data generated over large networks and the idea of leveraging edge intelligence for distributed learning  has renewed research interest in efficient data storage and communication. 
In these scenarios, the statistical attributes of data or learning models that are exchanged through the network are often not known \emph{a priori}. Typical compression methods for a large volume of data include quantization or discretization
, sparsification \cite{wen2017terngrad, shi2019sparsification, sattler2019sparse}, and source coding \cite{yang2023federated, zhang2021adaptive}
. This work focuses on designing efficient lossless source coding for lengthy sequences of discrete symbols with an unknown underlying distribution.

Shannon notably established in \cite{shannon1948mathematical} that the optimal compression ratio achievable through lossless coding is fundamentally constrained by the entropy of the input distribution. Attaining Shannon's lower bound has attracted constant research attention, especially when catering to diverse data models. For instance, Golomb coding with an optimized parameter is known to reach a bit length close to the entropy for geometric distributions \cite{golomb1966run}. For more general inputs, Huffman coding is considered optimal with integer bit lengths \cite{huffman1952method}. 
Moreover, arithmetic coding, which uses fractional bit lengths, draws closer to the entropy limit by relying on the precision of fraction representations of bit strings \cite{rissanen1976generalized, pasco1977source}. 

However, entropy coding techniques often entail significant computational complexity in parameter tuning or intricate codebook construction \cite{kiely2004selecting, huffman1952method}.

Run-length (RL) coding offers a more straightforward, albeit sub-optimal, approach to source coding. It divides the input sequence into several sub-sequences, or `runs,' characterized by consecutive repetitions of the same value and encodes the length of each run. 
It is particularly effective in coding sequences with biased distributions that lead to long runs. 
Furthermore, RL coding can be seamlessly combined with other integer coding methods to enhance the compression ratio. A recent study in \cite{gajjala2020huffman} proposed to combine RL coding with Huffman coding for model compression in deep learning. Similarly, \cite{alistarh2017qsgd} incorporated RL coding with Elias coding for the communication of quantized models in distributed machine learning. However, the efficacy of RL coding critically depends on the chosen coding schemes for run lengths. Ideally, these should be determined with insights from the knowledge or estimation of input probability models. To our understanding, there is no universal RL coding scheme that reaches the entropy bit-length limit for any unknown distribution. 

In this paper, we propose a novel encoding scheme for discrete input sequences of independent and identically distributed (i.i.d) symbols. Our scheme encodes frequency and positional information without needing expensive encoding and decoding algorithms. We also provide an asymptotic analysis of the bit rate associated with our scheme proving our scheme yields a near-entropy bit rate.

\section{Problem Formulation}
In this work, we consider the source coding problem for the lossless compression of a high-dimensional discrete input vector $\xv=[x_1,x_2,...,x_N]$, where $N$ denotes the input length. Each $x_n$ takes its value from a finite alphabet $\mathcal{Q}=\{q_0,q_1,\cdots,q_{L\!-\!1}\}$, where $q_\ell\in\Real$ is the $\ell$-th possible value, and $L$ denotes the range of possible values. We assume that each element $x_i$ is i.i.d. with some \emph{unknown} probability mass function (PMF), denoted by $f$.

We explore a scenario where the input length significantly exceeds the number of potential input values, \ie $N\gg L$. This configuration is commonly encountered in many fields, such as image and model compression for deep learning. Moreover, we assume that both the encoder and decoder are aware of the alphabet $\mathcal{Q}$.\footnote{If $\mathcal{Q}$ remains unknown to the decoder, the encoder can send the $L$ values in $\mathcal{Q}$ to the decoder with high precision, incurring a communication overhead of $\mathcal{O}(L)$. Given $N\gg L$, this overhead is negligible compared to that of transmitting $\xv$.} We note that each $q_{\ell} \in \mathcal{Q}$ can be distinctly represented by its respective index in $\mathcal{Q}$. Therefore, encoding $\xv$ is equivalent to the \emph{integer} encoding of a vector of indices representing its elements, and we can assume the input alphabet to be $\mathcal{Q}=\{0,1,\cdots,L-1\}$. 

The goal of this work is to develop a lossless encoding scheme $\mathcal{E}(\cdot)$ that maps $\xv$ to a binary code, as well as a corresponding decoding scheme $\mathcal{D}(\cdot)$ reconstructing $\xv$ from the code received with no error. Our aim is minimizing the length of the encoded bits sequence $|\mathcal{E}(\xv)|$, regardless of the unknown input distribution. As shown in Shannon's source coding theorem, the minimal bit length per symbol of any encoder is bounded below by the entropy $H(f)$ \cite{shannon1948mathematical}. 
Existing techniques that approach this entropy limit, such as Golomb and Huffman coding, are optimal for certain distributions \cite{golomb1966run, huffman1952method}, but efficient implementations of these methods require careful hyper-parameter fine-tuning and may suffer from high computational complexity \cite{kiely2004selecting, huffman1952method}. In contrast, we propose a tuning-free near-entropy coding scheme $\mathcal{E}$ tailored for any arbitrary unknown distribution in the next section.

\section{Proposed Method}
Our method is built upon the observation that the input $\xv$ can be described by the frequencies and positions of its possible values within $\mathcal{Q}$. To elucidate, we define the type of $\xv$ as $ {\bf t(\xv)}=[t_0(\xv),\ldots,t_{L-1}(\xv)] \in\mathbb{Z}^L$. Here, $t_\ell(\xv)$ represents the frequency of $\ell$ in $\xv$, \ie
\begin{align} \label{eq01}
t_{\ell}(\xv)=\sum_{n=1}^N \delta(x_n-\ell),
\end{align}
where $\delta(\cdot)$ is the Dirac delta function satisfying $\delta(x)=1$ when $x=0$ and $\delta(x)=0$ otherwise. For every $0\leq\ell\leq L-1$, we define an $N$-dimensional support vector $\sv_\ell\in\{0,1\}^N$ with the $n$-th entry given by 
\begin{align} \label{eq02}
    s_\ell[i]= \delta(x_n-\ell), ~~\text{}~~ 1\leq n\leq N.
\end{align} 
Consequently, the values of $\xv$ can be succinctly expressed through $\tv(\xv)$ and $\{\sv_\ell\}_{\ell=0}^{L-1}$. We propose a scheme $\mathcal{E}(\xv)$ to encode $\tv(\xv)$ and $\{\sv_\ell\}_{\ell=0}^{L-1}$ with the following steps:

\noindent\textbf{Type encoding}. We consecutively encode the $L$ elements of $\tv(\xv)$ in \eqref{eq01}. Given that the types are large integers, we adopt Elias omega coding from \cite{elias1975universal} for the encoding, which is recognized as a universal code for positive numbers.

\noindent\textbf{RL representation of the support vectors}. We propose a representation of the $L$ support vectors $\{\sv_\ell\}_{\ell=0}^{L-1}$ that reduces their dimensionality. To begin, we rearrange the type vector $\tv(\xv)$ in a descending order. Note that this order is also clear to the decoder upon the decoding of the type vector. 
For simplicity of notation, we assume without loss of generality that the entries in $\tv(\xv)$ are already arranged such that $t_0(\xv)\geq t_1(\xv)\geq \cdots \geq t_{L-1}(\xv)$.

For any symbol index $1\leq n\leq N$, the set $\{s_{\ell}[n]\}_{\ell=0}^{L-1}$ contains $L-1$ zeros with only one entry being one. Thus,
\begin{align}\label{eq03}
\sum_{\ell=0}^{L-1}s_{\ell}[n]=1, \forall n.
\end{align}
This observation suggests that encoding only the last $L-1$ support vectors is sufficient as the first support vector $\sv_0$ can be deduced by $s_0[n]=1-\sum_{\ell=1}^{L-1}s_\ell[n],\forall n.$

Furthermore, for any $\ell\geq 1$ and $\forall n$, $s_\ell[n]$ is zero when there exists an $\ell^\prime<\ell$ such that $s_{\ell^\prime}[n]=1$.
Motivated by this fact, we represent $\sv_\ell$ for $\ell=1,\cdots,L-1$ with iteratively shrinking subsets. We define the index sets $\mathcal{I}_\ell,1\leq \ell \leq L-1$, with $\mathcal{I}_1=\{1,\cdots,N\}$ and $\mathcal{I}_{\ell+1}=\mathcal{I}_\ell\setminus\{1\leq n\leq N:s_\ell[n]=1\}$, where $\setminus$ is the set difference operation.
Let $\sv_\ell[\mathcal{I}_\ell]$ denote the sub-vector of $\sv_\ell$ indexed by $\mathcal{I}_\ell$.
It is sufficient to use $\{\sv_\ell[\mathcal{I}_\ell]\}_{\ell=1}^{L-1}$ to represent $\{\sv_\ell\}_{\ell=0}^{L-1}$.
Finally, we represent each $\sv_\ell[\mathcal{I}_\ell]$ for $\ell=1,\cdots,L-1$, using its RLs of zeros, defined as

the numbers of consecutive zeros between two ones. For each $\ell$, we only need to encode the first $t_\ell(\xv)$ RLs and drop the last one as the sum of all of the RLs is equal to $|\mathcal{I}_\ell|-t_\ell(\xv)$. For example, the RLs of $[0,0,1,0,1,0,0,0]$ are $[2,1,3]$, and we shall encode $2$ and $1$.

\noindent\textbf{Support encoding}. For $\ell\geq 1$, we iteratively encode the RLs of $\sv_\ell[\mathcal{I}_\ell]$ by using Golomb coding \cite{golomb1966run}. This method requires a predefined hyper-parameter $M$ for the separate encoding of an integer $X$ based on its quotient and its remainder. 
The quotient $\lfloor X/M\rfloor$ is encoded by unary coding, followed by the truncated binary code for the remainder $X-\lfloor X/M\rfloor$.

For each iteration $\ell$, the Golomb coding parameter $M_\ell$ is chosen to minimize the code length, and has an optimal value given by the nearest integer value of $(\ln 2) (N-\sum_{m\leq \ell} t_m(\xv))/t_\ell(\xv)$, which we denote by $[(\ln 2) (N-\sum_{m\leq \ell} t_m(\xv))/t_\ell(\xv)]$. The derivation of this value can be found in Section \ref{sec4}. Finally, we sequentially append the Golomb codes of the RLs to the codes of $\tv(\xv)$.\footnote{The value of each $M_\ell$ can be computed using $\tv(\xv)$ at the decoder side. Therefore, there is no need to additionally encode $M_\ell$.}

We summarize the encoding scheme in Algorithm \ref{alg:cap}. The decoding scheme is presented as follows.

\noindent\textbf{Decoding}. The decoder begins by decoding the first $L$ integers using the Elias omega decoder to reconstruct the type vector $\tv(\xv)$. Next, all the remaining bits are decoded to extract the RLs using the corresponding Golomb decoding with parameter $M_\ell$.
Note that each $\mathcal{I}_{\ell+1},\ell\geq 1,$ can be computed by the values of $\mathcal{I}_{\ell}$ and $\sv_\ell[\mathcal{I}_\ell]$. Starting from iteration $\ell=1$ to $L-1$, we 
retrieve $\mathcal{I}_\ell$ and $\sv_\ell[\mathcal{I}_\ell]$ from the decoded RLs. The original support vectors $\{\sv_\ell\}_{\ell=1}^{L-1}$ are obtained by zero-padding $\{\sv_\ell[\mathcal{I}_\ell]\}_{\ell=1}^{L-1}$ at the removed positions. 
Then, $\sv_0$ is computed and $\xv$ is recovered by \eqref{eq03} and \eqref{eq02}, respectively.

\begin{algorithm}[hbt!] 
\caption{The proposed encoding scheme. \label{alg:cap}}
\begin{algorithmic}
\Require $\xv \in \mathbb{R}^N$, $L > 0$.
\State Compute $\bm t(\bm x)$ and encode it with Elias omega encoding.
\State Initialize $\mathcal{I} = \{ 1, \dots, N\}$.
\For{$\ell = 1, \dots, L-1$}
    \State $\mathcal{I}^\prime = \emptyset$, $\mathcal{R}_{\ell} = \emptyset$.
    \State Compute $M_{\ell} = [(\ln 2) (N-\sum_{m\leq \ell} t_m(\xv))/t_\ell(\xv)]$.
    \State $r = 0$.
    \For{$n \in \mathcal{I}$}
    \If{$\delta(x_n - \ell) = 0$}
    \State $\mathcal{R}_{\ell} = \mathcal{R}_{\ell} \cup \{ r \}$.
    \If{$|\mathcal{R}_{\ell}| < t_{\ell}(\bm x)$}
        \State Encode $r$ with Golomb coding.
    \EndIf
    \State $r = 0$.
    \State $\mathcal{I}^\prime = \mathcal{I}^\prime \cup \{n\}$.
    \Else
    \State $r = r + 1$.
    \EndIf
    \State $\mathcal{I} = \mathcal{I}$ $ \backslash$ $\mathcal{I}^\prime$.
    \EndFor
\EndFor
\end{algorithmic}
\end{algorithm}

\begin{figure*}[h]
    \centering
    \begin{subfigure}[b]{0.33\textwidth}
        \includegraphics[width=\textwidth]{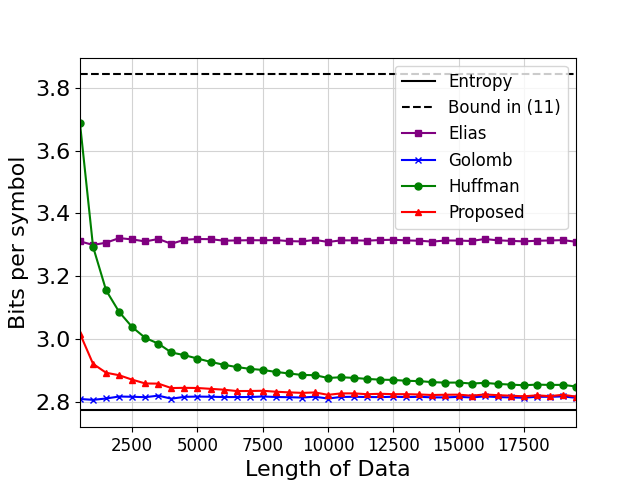}
        \caption{}
        \label{fig:geo}
    \end{subfigure}
    \begin{subfigure}[b]{0.33\textwidth}
        \includegraphics[width=\textwidth]{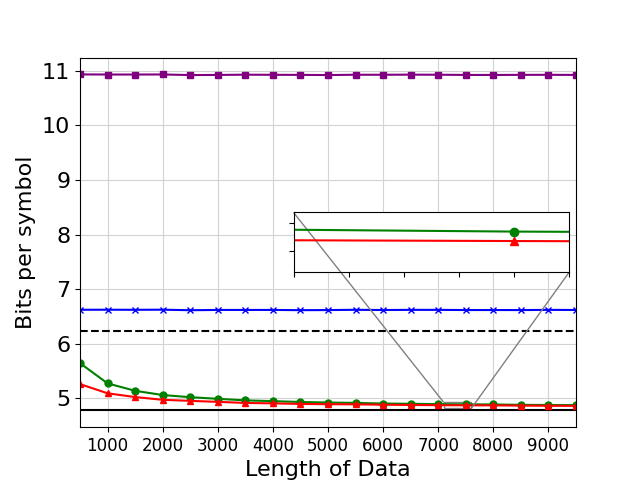}
        \caption{}
        \label{fig:bim}
    \end{subfigure}
    \begin{subfigure}[b]{0.33\textwidth}
        \includegraphics[width=\textwidth]{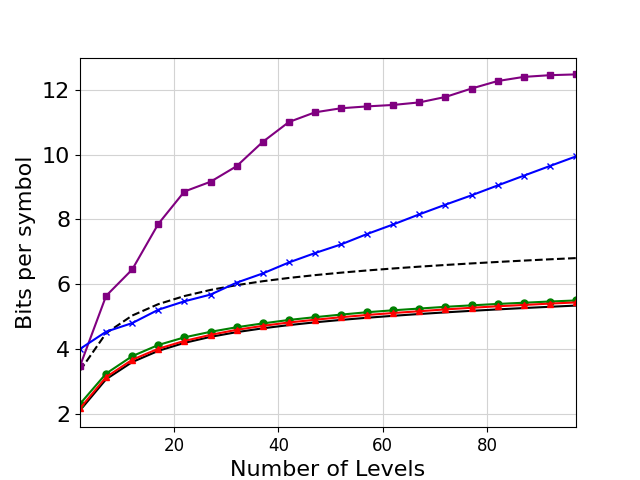}
        \caption{}
        \label{fig:fixN}
    \end{subfigure}
    \label{fig:foobar}
    \caption{(a) Bit length per symbol versus the data length $N$ with inputs i.i.d. drawn from the geometric distribution with parameter equal to $0.33$ and with $L=50$. (b) Bit length per symbol versus the data length $N$ for the bimodal distribution. (c) Data compression performance for the bimodal distribution under a varying value of $L$, where the input length $N=10,000$.}
\end{figure*}

\section{Asymptotic Analysis}\label{sec4}
In this section, we analyze the code length and the computational complexity of the proposed method in the large-system limit with $N\to \infty$ and $L=o(N)$.

\noindent\textbf{Bit analysis}. For $\xv$ with entries i.i.d. drawn from PMF $f$, define $f_\ell\triangleq f(x_i=\ell)$ for $0\leq \ell\leq L-1$. Without loss of generality, we assume in this section that $f_0\geq f_1\geq\cdots\geq f_{L-1}$. We aim to derive an upper bound for the average bit length per symbol, denoted by $\frac{1}{N}{|\mathcal{E}(\xv)|}$, as $N\to \infty$.

As $\sum_{\ell=0}^{L-1}t_\ell(\xv)=N$, it is shown in \cite[Lemma A.3]{alistarh2017qsgd} that the code length per symbol for encoding $\tv(\xv)$ with the Elias omega coding is upper bounded by
\begin{align}\label{eq04}
\frac{L}{N}\left((1+o(1))\log_2 \frac N L+1\right),
\end{align}
where $o(1)$ is a diminishing term as $N\to \infty$. It follows from \eqref{eq04} that the bit length per symbol for encoding the types tends to zero as $N\to \infty$. The next theorem characterizes the bit length for encoding each support vector $\sv_\ell[\mathcal{I}_\ell]$.% using our method.
\newcommand{\round}[1]{\text{round}\left(#1\right)}
\begin{theorem}\label{the1}For any $1\leq \ell\leq L-1$ and $N\to \infty$, encoding the RLs of $\sv_\ell[\mathcal{I}_\ell]$ using the proposed method yields a bit length no larger than
\begin{align}\label{eq05}
f_\ell\left(\log_2\frac{1-\sum_{m=1}^{\ell-1} f_m(\xv)}{f_\ell(\xv)}+2.914\right).
\end{align}
In particular, the optimal choice of the Golomb parameter $M_\ell$ that leads to \eqref{eq05} is given by the nearest integer value of $(\ln 2) (1-\sum_{1\leq m\leq \ell} f_m(\xv))/f_\ell(\xv)$.
\end{theorem}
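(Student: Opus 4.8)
The plan is to upper-bound the Golomb code length of the run-lengths deterministically in terms of the type vector, pass to the limit $N\to\infty$ using the law of large numbers, and then optimize the resulting bound over the integer Golomb parameter. First I would record the two counts that govern everything. By construction $|\mathcal{I}_\ell| = N - \sum_{m=1}^{\ell-1} t_m(\xv)$, and every index $n$ with $x_n=\ell$ survives into $\mathcal{I}_\ell$, so $\sv_\ell[\mathcal{I}_\ell]$ contains exactly $t_\ell(\xv)$ ones and hence $|\mathcal{I}_\ell|-t_\ell(\xv) = N-\sum_{m=1}^{\ell} t_m(\xv)$ zeros. Since the scheme encodes the first $t_\ell(\xv)$ of the $t_\ell(\xv)+1$ run-lengths of zeros and drops a nonnegative one, the encoded run-lengths $X_1,\dots,X_{t_\ell(\xv)}$ satisfy $\sum_i X_i \le N-\sum_{m=1}^{\ell} t_m(\xv)$.

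Next I would bound each Golomb codeword. With parameter $M_\ell$, the unary quotient of $X_i$ costs $\lfloor X_i/M_\ell\rfloor+1\le X_i/M_\ell+1$ bits and the truncated-binary remainder costs at most $\lceil\log_2 M_\ell\rceil\le \log_2 M_\ell+1$ bits. Summing over the $t_\ell(\xv)$ run-lengths and using the bound on $\sum_i X_i$, the total code length is at most
\begin{align*}
\frac{N-\sum_{m\le \ell}t_m(\xv)}{M_\ell} + t_\ell(\xv)\bigl(\log_2 M_\ell + 2\bigr),
\end{align*}
a deterministic function of the type. Dividing by $N$ (the statement is for the normalized length, consistent with \eqref{eq04}) and letting $N\to\infty$, the strong law of large numbers gives $t_m(\xv)/N\to f_m$, so $\bigl(N-\sum_{m\le\ell}t_m(\xv)\bigr)/N\to a_\ell := 1-\sum_{m=1}^{\ell}f_m$, and the algorithm's $M_\ell$ converges to the constant $M_\ell^\star := \mathrm{round}\bigl((\ln 2)\,a_\ell/f_\ell\bigr)$. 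Hence the per-symbol length converges to $g(M_\ell^\star)+2f_\ell$, where $g(M):=a_\ell/M+f_\ell\log_2 M$.

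It then remains to show $g(M_\ell^\star)+2f_\ell\le f_\ell\bigl(\log_2(b_\ell/f_\ell)+2.914\bigr)$ with $b_\ell := 1-\sum_{m=1}^{\ell-1}f_m = a_\ell+f_\ell$. Over real $M>0$, $g$ decreases then increases with a unique minimizer $M=(\ln 2)a_\ell/f_\ell$ and minimum value $f_\ell\log_2\bigl(e\ln 2\cdot a_\ell/f_\ell\bigr)$, where $\log_2(e\ln 2)\approx 0.9139$; this both motivates the choice of $M_\ell^\star$ and already yields the claim (with the tighter $a_\ell$ in place of $b_\ell$) whenever $(\ln 2)a_\ell/f_\ell$ happens to be an integer. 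To absorb the integer rounding in general I would set $\rho:=a_\ell/f_\ell$ and reduce to the one-variable inequality $\rho/M_\ell^\star+\log_2 M_\ell^\star\le \log_2(\rho+1)+0.914$ with $M_\ell^\star=\mathrm{round}((\ln 2)\rho)$. The crucial structural input is $\rho\ge 1$: since we sorted $f_0\ge f_1\ge\cdots$, we have $b_\ell=f_0+\sum_{m\ge\ell}f_m\ge 2f_\ell$, hence $a_\ell\ge f_\ell$. On each rounding interval $\{\rho:\mathrm{round}((\ln 2)\rho)=k\}$ the left side is affine in $\rho$ while $\log_2(\rho+1)$ is concave, so the gap is concave and it suffices to check the finitely many relevant interval endpoints; at the binding (right) endpoints the inequality reduces to $\log_2(1+1.193/k)\ge 0.7213/k$, which holds for every integer $k\ge 1$, while at the left endpoints it is trivial. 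The slack that makes all cases pass is exactly the small positive gap $0.914-\log_2(e\ln 2)$ together with the replacement of $a_\ell$ by the larger $b_\ell=\rho+1$ (in units of $f_\ell$).

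The main obstacle is this last step — quantifying the loss from rounding the Golomb parameter to an integer precisely enough to stay below the stated constant — whereas everything preceding it is a routine counting argument plus the law of large numbers. Two minor points remain to be dispatched: when $(\ln 2)a_\ell/f_\ell$ is a half-integer (a non-generic case) the algorithm may select either nearest integer, but the endpoint check covers both; and when $t_\ell(\xv)=0$ there is nothing to encode and the bound is trivially true.
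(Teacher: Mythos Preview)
Your proposal is correct and follows essentially the same route as the paper: a deterministic Golomb-length bound in terms of the type via \eqref{eq06}--\eqref{eq07}, passage to the limit using the law of large numbers \eqref{eq08} to obtain $g(M_\ell)$, and optimization over $M_\ell$. The only difference is that where the paper simply asserts the last inequality $g(\lceil(\ln 2)a_\ell/f_\ell\rceil)<f_\ell(\log_2(b_\ell/f_\ell)+2.914)$, you supply an explicit verification via the concavity/endpoint argument and the observation $\rho=a_\ell/f_\ell\ge 1$ from the sorted order; this is a welcome elaboration rather than a different method (note only that the ``finitely many'' should really be ``countably many,'' but your parametric check for all $k\ge 1$ handles this).
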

\begin{proof}
Let $\mathcal{R}_\ell$ denotes the corresponding RL set of $\sv_\ell[\mathcal{I}_\ell]$. It follows from the definition of the RL that

\begin{align}
&|\mathcal{R}_\ell|=t_\ell(\xv),\label{eq06}\\
&\sum_{r\in \mathcal{R}_\ell}r \leq \sum_{i\in \mathcal{I}_\ell} \delta(s_\ell[i])=|\mathcal{I}_\ell|-t_\ell(\xv)=N-\sum_{m=1}^{\ell} t_m(\xv).\label{eq07}
\end{align}
By the law of large numbers, the types in $\tv(\xv)$ converge to their expected values as $N\to \infty$, \ie
\begin{align}\label{eq08}
\lim_{N\to \infty} \frac{t_{\ell}(\xv) }{N} = \mathbb{E}_{x\sim f}[t_{\ell}({\bf x})] = f_{\ell},\forall \ell.
\end{align}

As $N\to \infty$, the number of bits per symbol in encoding $\mathcal{R}_\ell$ with the Golomb parameter $M_\ell$ is bounded by
\begin{align}
&\lim_{N\to \infty}\frac{1}{N}\sum_{r\in \mathcal{R}_\ell} \left(\overbrace{\left\lfloor \frac{r}{M_\ell}\right\rfloor +1}^{\text{Quotient encoding}} + \overbrace{\left\lfloor \log_2 M_\ell\right\rfloor+1}^{\text{Remainder encoding}}\right) \nonumber\\
\leq &\lim_{N\to \infty}\frac{1}{N}\left( \frac{1}{M_\ell}\sum_{r\in \mathcal{R}_\ell} r+|\mathcal{R}_\ell|\left(\log_2 M_\ell+2\right)\right)\nonumber\\
\overset{\eqref{eq06},\eqref{eq07}}{\leq}&\lim_{N\to \infty}\frac{N-\sum_{m=1}^{\ell} t_m(\xv)}{NM_\ell}+\frac{t_{\ell}({\bf x})}{N}(\log_2M_\ell + 2)\nonumber\\
\overset{\eqref{eq08}}{=}&\frac{1-\sum_{m=1}^{\ell} f_m}{M_\ell}+f_\ell(\log_2M_\ell + 2)\triangleq g(M_\ell). 
\label{eq09}
\end{align}
Using the first-order optimality condition of one-dimensional functions, one can show from the derivative of $g(M_\ell)$ that the optimal $M_\ell\in \mathbb{Z}$ that minimizes $g(M_\ell)$ is given by $(\ln 2) (1-\sum_{1\leq m\leq \ell} f_m(\xv))/f_\ell(\xv)$. The minimum value of $g(M_\ell)$ is further bounded by
\begin{align}
\min_{M_\ell} g(M_\ell)\leq& g\left(\left\lceil\ln 2 \frac{1-\sum_{m=1}^{\ell} f_m(\xv)}{f_\ell(\xv)}\right\rceil\right)\nonumber\\
< &f_\ell\left(\log_2\frac{1-\sum_{m=1}^{\ell-1} f_m(\xv)}{f_\ell(\xv)}+2.914\right).
\end{align}
\end{proof}

From Theorem \ref{the1}, the overall asymptotic code length of the proposed method satisfies the following inequality:
\begin{align}
&\lim_{N\to \infty}\frac{1}{N}|\mathcal{E}(\xv)|< \sum_{\ell=1}^{L-1} f_\ell\left(2.914+\log_2 \left(\frac{1-\sum_{m=1}^{\ell-1}f_m}{f_\ell}\right)\right)\nonumber\\
=&2.914(1-f_0)-\sum_{\ell=1}^{L-1} f_\ell \log_2 f_\ell+\sum_{\ell=1}^{L-1}f_\ell \log_2(1-\sum_{m=1}^{\ell-1}f_m)\nonumber\\
=&H(f)+C(f),\label{analytical}
\end{align}\\
where $C(f)=2.914(1-f_0)+f_0\log_2 f_0+\sum_{\ell=1}^{L-1}f_\ell \log_2(1-\sum_{m=1}^{\ell-1}f_m)$ is the maximum gap between the code length of the proposed method and the entropy lower bound.

In practice, the choice of $M_{\ell}$ used in Algorithm \ref{alg:cap} is only an approximation to the exact optimal parameter as the true underlying PMF $f$ is unknown. By \eqref{eq08}, this choice of $M_{\ell}$ converges to the optimal parameter.

\noindent
\textbf{Complexity analysis.} 
Next, we compare the computational complexity of our proposed method with the existing baseline. Our proposed scheme has a  complexity of $\mathcal{O}(N\log L)$ to encode and decode the RLs, as the complexity of encoding each run is $\mathcal{O}(\log L)$ with a total of $\sum_\ell t_{\ell}({\bf x}) = N$ runs to encode; see \eqref{eq09}. In addition, encoding the type vector has a complexity of $\mathcal{O}(L\log L)$. For comparison, Huffman coding requires creating a length-$L$ codebook based on the frequencies of all the possible values. Then each symbol is encoded by searching the codebook, leading to a total worst-case computational complexity of $\mathcal{O}(NL)$. In comparison, our proposed method has an overall lower complexity than Huffman coding with large $N$.

\section{Numerical Results}
We evaluate the performance of the proposed scheme via simulations. We compare our method in terms of coding bit length with baselines, including symbol-wise Golomb coding, symbol-wise Elias coding, and Huffman coding, as well as the entropy lower bound. 

Fig. \ref{fig:geo} plots the average bit length per symbol over 50 Monte Carlo trials on the i.i.d. geometric distribution with parameter 0.33.
In this setup, Golomb coding is deemed to attain the entropy limit with an optimal selection of the parameter $M$. Through trial and error, we find that $M=2$ offers the best performance. In contrast, our method achieves a comparable compression rate with Huffman coding, which is known to be the optimal character coding scheme. In particular, the bit length of the proposed method converges to the value matching the same integer value of the entropy as $N$ grows. This observation confirms the analysis in Section \ref{sec4}.

In fig. \ref{fig:bim}, we evaluate the bit lengths of different coding schemes using symbols drawn i.i.d. from a synthetic bimodal distribution. Specifically, the distribution is generated by overlaying the binomial distribution with parameters of $50$ and $0.33$ and its duplicate shifted by 20. The overlaid distribution is then truncated to fall within the range of $[0,50]$. 
Similar to Fig. \ref{fig:geo}, Fig. \ref{fig:bim} shows that the code length per symbol of our method converges to the entropy limit as $N$ increases, demonstrating the efficiency of our scheme under different distributions. Fig. \ref{fig:fixN} plots the coding performance in relation to a varying input alphabet range $L$ under the bimodal distribution. The entropy of the input distribution grows with $L$, and eventually converges with a large $L$, because of the exponentially diminishing PMF trend.

\section{Conclusion}
In this work, we studied the source coding of long sequences with i.i.d symbols drawn from an unknown PMF and introduced an efficient support encoding scheme. 
We represented these sequences by their frequency and positional information and iteratively encode the support vectors while discarding the positions encoded in earlier support vectors, whose RLs are encoded with Golomb coding.  
The proposed method avoids the need for codebook construction and parameter tuning, thus leading to a lower computational complexity compared to existing optimal coding schemes. Theoretic analyses show our method asymptotically achieves an average bit length per symbol close to the entropy limit. Finally, numerical results verify the efficiency of our method in various setups.

\vfill\pagebreak

\bibliographystyle{IEEEtran}
\bibliography{ref}

\end{document}